\newcommand{\lyxaddress}[1]{
\par {\raggedright #1
\vspace{1.4em}
\noindent\par}
}
  \theoremstyle{plain}
  \newtheorem{lem}{\protect\lemmaname}
  \theoremstyle{definition}
  \newtheorem{defn}{\protect\definitionname}
  \theoremstyle{plain}
  \newtheorem{thm}{\protect\theoremname}
   \newenvironment{proof}[1][\proofname]{\par
     \normalfont\topsep6\p@\@plus6\p@\relax
     \trivlist
     \itemindent\parindent
     \item[\hskip\labelsep
           \scshape
       #1]\ignorespaces
   }{%
     \endtrivlist\@endpefalse
   }
   \providecommand{\proofname}{Proof}
\date{}
\providecommand{\definitionname}{Definition}
\providecommand{\lemmaname}{Lemma}
\providecommand{\theoremname}{Theorem}
\begin{document}

\title{\textbf{Finite-Sample Bounds for the Multivariate Behrens--Fisher
Distribution with Proportional Covariances}}

\author{Yixuan Qiu and Lingsong Zhang}
\maketitle

\lyxaddress{\begin{center}
Department of Statistics, Purdue University, West Lafayette, Indiana
47907, U.S.A.
\par\end{center}}
\begin{abstract}
The Behrens--Fisher problem is a well-known hypothesis testing problem
in statistics concerning two-sample mean comparison. In this article,
we confirm one conjecture in \citet{eaton1972random}, which provides
stochastic bounds for the multivariate Behrens--Fisher test statistic
under the null hypothesis. We also extend their results on the stochastic
ordering of random quotients to the arbitrary finite dimensional case.
This work can also be seen as a generalization of \citet{hsu1938contribution}
that provided the bounds for the univariate Behrens--Fisher problem.
The results obtained in this article can be used to derive a testing
procedure for the multivariate Behrens--Fisher problem that strongly
controls the Type I error.

\noindent \textbf{Keywords}: Behrens--Fisher problem; Hypothesis
testing; Mean comparison; Stochastic bound; Type I error.
\end{abstract}

\section{Introduction}

\label{sec:introduction}

The Behrens--Fisher problem is one of the most well-known hypothesis
testing problems that has been extensively studied by many statisticians,
partly due to its simple form and numerous real applications. The
univariate Behrens--Fisher problem can be phrased as follows: Let
$X=(X_{1},\ldots,X_{m})$ and $Y=(Y_{1},\ldots,Y_{n})$ be two independent
random samples with $X_{i}\overset{iid}{\sim}N(\mu_{1},\sigma_{1}^{2})$
and $Y_{i}\overset{iid}{\sim}N(\mu_{2},\sigma_{2}^{2})$, where all
the four parameters are unknown, and the target is to test $H_{0}:\mu_{1}=\mu_{2}$
versus $H_{a}:\mu_{1}\ne\mu_{2}$.

There was enormous research on designing a procedure to test this
hypothesis, for example Fisher's fiducial inference \citep{fisher1935fiducial},
Scheff{\'e}'s $t$-distribution method \citep{scheffe1943solutions},
the generalized $p$-value method \citep{tsui1989generalized}, the
marginal inferential models \citep{martin2014marginal}, and many
others that were summarized in review articles such as \citet{scheffe1970practical}
and \citet{kim1998behrens}. Among all these approaches, the most
broadly-adopted test statistic is the Behrens--Fisher statistic.
Using conventional notations, $\overline{X}$ and $\overline{Y}$
are the two sample means, $S_{1}^{2}$ and $S_{2}^{2}$ are the two
unbiased sample variances, and then the Behrens--Fisher statistic
is defined by $T=\left(S_{1}^{2}/m+S_{2}^{2}/n\right)^{-1/2}(\overline{X}-\overline{Y})$.
It is well known that the sampling distribution of $T$ under $H_{0}$
depends on the unknown variance ratio $\sigma_{1}^{2}/\sigma_{2}^{2}$,
and various methods were proposed to approximate this null distribution,
for example the most widely-used Welch-Satterthwaite approximate degrees
of freedom \citep{satterthwaite1946approximate,welch1947generalization}.

Despite their extreme popularity in applications, one critical issue
of the approximation methods is that they do not guarantee the control
of Type I error. Therefore, conservative test procedures that can
strongly control the Type I error are also of interest. The remarkable
works \citet{hsu1938contribution} and \citet{mickey1966bounds} showed
that the distribution function of $T$ is bounded below by $t_{\min\{m-1,n-1\}}$,
the $t$-distribution with $\min\{m-1,n-1\}$ degrees of freedom,
and bounded above by $t_{m+n-2}$. With this result, one can use critical
values or $p$-values based on $t_{\min\{m-1,n-1\}}$ to test the
hypothesis, which ensures the limit of Type I error. This approach
also motivated works such as \citet{hayter2013new} and \citet{martin2014marginal}.

The Behrens--Fisher problem was also generalized to the multivariate
case in various research articles. In this setting, each observation
follows a multivariate normal distribution, and the target is to test
the equality of the two mean vectors. In the multivariate case, most
of the approaches are based on the approximate degrees of freedom
framework, for example \citet{yao1965approximate}, \citet{johansen1980welch},
\citet{nel1986solution}, and \citet{krishnamoorthy2004modified}.
Also see \citet{christensen1997comparison} for a comparison of other
solutions.

Alternatively, along the direction of \citet{hsu1938contribution}
and \citet{mickey1966bounds}, \citet{eaton1972random} attempted
to develop stochastic bounds for the test statistic in the multivariate
case, and they provided the result for the two-dimensional case with
proportional covariances assumption. However, the theorem that they
developed to prove the result had the restriction that it only applied
to the two-dimensional case, so they left the general finite dimensional
case as a conjecture.

In this article, we study the same problem as in \citet{eaton1972random}
using a related but different approach, and we are able to confirm
this conjecture and generalize their result to the arbitrary finite
dimensional case. As a result, we provide sharp bounds for the multivariate
Behrens--Fisher distribution with proportional covariances, as a
direct generalization of Hsu's result in the univariate case.

The remaining part of this article is organized as follows. In Section
\ref{sec:mbf} we briefly introduce the multivariate Behrens--Fisher
problem and review some existing results on it. Section \ref{sec:main_results}
is the main part of this article, where two major theorems that describe
the stochastic bounds for the test statistic are provided. In Section
\ref{sec:simulation} we use numerical simulations to illustrate the
performance of the proposed test compared with other approximation
methods. And finally in Section \ref{sec:conclusion}, some discussions
and the conclusion of this article are provided. The proofs of two
important lemmas are in the appendix.

\section{Multivariate Behrens--Fisher Problem}

\label{sec:mbf}

In this section we briefly describe the multivariate Behrens--Fisher
problem and review some relevant results on it. Similar to the univariate
case, let $X=(X_{1},\ldots,X_{m})^{\mathrm{T}}$ and $Y=(Y_{1},\ldots,Y_{n})^{\mathrm{T}}$
be two independent random samples, with each observation following
a $p$-dimensional multivariate normal distribution: $X_{i}\overset{iid}{\sim}N(\mu_{1},\Sigma_{1})$,
and $Y_{i}\overset{iid}{\sim}N(\mu_{2},\Sigma_{2})$. The problem
of interest is to test $H_{0}:\mu_{1}=\mu_{2}$ versus $H_{a}:\mu_{1}\ne\mu_{2}$,
with all the distributional parameters unknown. Following the same
assumption in \citet{eaton1972random}, we assume that $X$ and $Y$
have proportional covariances, \emph{i.e.},
\begin{equation}
\Sigma_{1}=\Sigma,\quad\Sigma_{2}=k\Sigma\label{eq:proportional_covariance}
\end{equation}
for some unknown $p\times p$ positive definite matrix $\Sigma$ and
an unknown constant $k$ ($k>0$). In the remaining part of this article
we assume that $p<\min\{m,n\}$.

Let $\overline{X}=m^{-1}\sum_{i=1}^{m}X_{i}$ and $\overline{Y}=n^{-1}\sum_{i=1}^{n}Y_{i}$
be the sample means, and $S_{1}=(m-1)^{-1}\sum_{i=1}^{m}(X_{i}-\overline{X})(X_{i}-\overline{X})^{\mathrm{T}}$
and $S_{2}=(n-1)^{-1}\sum_{i=1}^{n}(Y_{i}-\overline{Y})(Y_{i}-\overline{Y})^{\mathrm{T}}$
be the sample covariance matrices. It is well known that $\overline{X}\sim N(\mu_{1},m^{-1}\Sigma_{1}),\overline{Y}\sim N(\mu_{2},n^{-1}\Sigma_{2}),(m-1)S_{1}\sim W(\Sigma_{1},m-1)$,
and $(n-1)S_{2}\sim W(\Sigma_{2},n-1)$, where $W(\Sigma,n)$ stands
for a Wishart distribution with parameter $\Sigma$ and $n$ degrees
of freedom. All these four random vectors and matrices are independent
of each other. Furthermore, the multivariate Behrens--Fisher test
statistic is defined as
\begin{equation}
T^{2}=(\overline{X}-\overline{Y})^{\mathrm{T}}\left(m^{-1}S_{1}+n^{-1}S_{2}\right)^{-1}(\overline{X}-\overline{Y}),\label{eq:T2_statistic}
\end{equation}
and the sampling distribution of $T^{2}$ under $H_{0}$ is typically
called the multivariate Behrens--Fisher distribution. In this article,
our primary goal is to derive stochastic bounds for $T^{2}$ that
are free of the unknown parameters.

A major progress on this direction was made by \citet{eaton1972random}.
They first showed that under $H_{0}$,
\begin{equation}
T^{2}\overset{d}{=}Z^{\mathrm{T}}\left\{ \lambda(m-1)^{-1}W_{1}+(1-\lambda)(n-1)^{-1}W_{2}\right\} ^{-1}Z,\label{eq:T2_distribution}
\end{equation}
where $X\overset{d}{=}Y$ means $X$ and $Y$ have the same distribution,
$Z\sim N(0,I_{p}),W_{1}\sim W(I_{p},m-1),W_{2}\sim W(I_{p},n-1),\lambda=m^{-1}(m^{-1}+kn^{-1})^{-1}$,
$I_{p}$ is the $p\times p$ identity matrix, and $Z,W_{1}$, and
$W_{2}$ are independent. Then they proved that for $p=2$,
\begin{equation}
Z^{\mathrm{T}}\left\{ (m+n-2)^{-1}W_{(m+n-2)}\right\} ^{-1}Z\preceq_{st}T^{2}\preceq_{st}Z^{\mathrm{T}}\left\{ \nu^{-1}W_{(\nu)}\right\} ^{-1}Z,\label{eq:T2_stochastic_order}
\end{equation}
where $\preceq_{st}$ stands for the stochastic ordering, $\nu$ is
any integer satisfying $p\le\nu\le\min\{m-1,n-1\}$, and $W_{(n)}$
stands for a $W(I_{p},n)$ random matrix that is independent of $Z$.

However, in \citet{eaton1972random}, (\ref{eq:T2_stochastic_order})
was only proved for the case of $p=2$, since the underlying theory
did not generalize to higher dimensions. To overcome this difficulty,
in this article we use a different set of techniques to prove that
(\ref{eq:T2_stochastic_order}) also holds for $p>2$. The main results
are presented in Section \ref{sec:main_results}.

\section{Main Results}

\label{sec:main_results}

We first present two lemmas that are the keys to our main theorems,
whose proofs are given in the appendix. Lemma \ref{lem:linear_combination_z2}
studies the property of a linear combination of $Z_{i}^{2}$ random
variables, where $Z_{i}\overset{iid}{\sim}N(0,1)$.

\renewcommand{\labelenumi}{\alph{enumi})}
\begin{lem}
\label{lem:linear_combination_z2}Assume that $Z_{1},\ldots,Z_{p}$
are $p$ independent $N(0,1)$ random variables. Let $F(t;\theta)=F(t;\theta_{1},\ldots,\theta_{p})$
denote the distribution function of the random variable $T_{\theta}=\sum_{i=1}^{p}Z_{i}^{2}/\theta_{i},\theta_{i}>0$,
and define its partial derivatives as $f_{i}(t;\theta)=\partial F(t;\theta)/\partial\theta_{i}$
and $g_{i}(t;\theta)=\partial^{2}F(t;\theta)/\partial\theta_{i}^{2}$.
Then for $i,j=1,\ldots,p$, we have
\begin{enumerate}
\item $f_{i}(t;\theta)>0$,
\item $g_{i}(t;\theta)<0$, and
\item If $\theta_{i}<\theta_{j}$ then $f_{i}(t;\theta)>f_{j}(t;\theta)$.
\end{enumerate}
\end{lem}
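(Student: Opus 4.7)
My strategy is to condition on $R_{(i)}:=\sum_{k\ne i}Z_k^2/\theta_k$, which is independent of $Z_i$, to obtain integral representations of $f_i$ and $g_i$, and then to handle (c) via a Laplace-transform calculation. Since $Z_i^2\sim\chi_1^2$, writing $H$ and $h$ for the CDF and density of $\chi_1^2$ I would write $F(t;\theta)=E[H(\theta_i(t-R_{(i)})_+)]$; differentiating under the expectation then yields
\[ f_i(t;\theta) = E[(t-R_{(i)})_+\, h(\theta_i(t-R_{(i)})_+)], \qquad g_i(t;\theta) = E[(t-R_{(i)})_+^2\, h'(\theta_i(t-R_{(i)})_+)\, \mathbf{1}\{R_{(i)}<t\}]. \]

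Parts (a) and (b) fall out of these formulas. The integrand defining $f_i$ is nonnegative and strictly positive on $\{R_{(i)}<t\}$, which has positive probability for $t>0$, so $f_i>0$. For (b), a direct calculation gives $h'(v)=-h(v)(v+1)/(2v)<0$ for $v>0$, so the integrand defining $g_i$ is strictly negative on the same event, and $g_i<0$.

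For (c) I would pass to the Laplace domain. Let $L(s)=\prod_k(1+2s/\theta_k)^{-1/2}$ be the Laplace transform of $T_\theta$. Differentiating $L$ in $\theta_k$ and using $F(t;\theta)=\int_0^t\phi_\theta(u)\,du$ (where $\phi_\theta$ is the density of $T_\theta$) gives $\mathcal{L}\{f_k\}(s)=L(s)/(\theta_k(\theta_k+2s))$. Straightforward algebra then shows
\[ \mathcal{L}\{f_i-f_j\}(s) = \frac{\theta_j-\theta_i}{\sqrt{\theta_i\theta_j}}\, L_{(ij)}(s)\cdot \frac{\theta_i+\theta_j+2s}{(\theta_i+2s)^{3/2}(\theta_j+2s)^{3/2}}, \]
where $L_{(ij)}$ is the Laplace transform of $\sum_{k\ne i,j}Z_k^2/\theta_k$. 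Splitting the numerator as $(\theta_i+2s)+\theta_j$ and inverting the two resulting pieces via $\mathcal{L}^{-1}\{(a+2s)^{-1/2}\}(t)=(2\pi t)^{-1/2}e^{-at/2}$ and $\mathcal{L}^{-1}\{(a+2s)^{-3/2}\}(t)=\sqrt{t/(2\pi)}\,e^{-at/2}$, one convolution computation yields
\[ \mathcal{L}^{-1}\!\left\{\frac{\theta_i+\theta_j+2s}{(\theta_i+2s)^{3/2}(\theta_j+2s)^{3/2}}\right\}(t) = \frac{1}{2\pi}\int_0^t \sqrt{(t-u)/u}\,(1+\theta_j u)\,e^{-\theta_i u/2-\theta_j(t-u)/2}\,du, \]
manifestly positive for $t>0$. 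Convolving with the nonnegative inverse Laplace of $L_{(ij)}$ preserves positivity, so by uniqueness of the Laplace transform $f_i-f_j>0$, giving (c).

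I expect (c) to be the main obstacle. By symmetry of $F$ in $(\theta_i,\theta_j)$, $f_i-f_j$ vanishes when $\theta_i=\theta_j$, but its derivative in $\theta_j$ mixes $\partial^2F/\partial\theta_i\partial\theta_j$ (whose sign is not transparent) with $g_j<0$, so naive monotonicity arguments stall. The Laplace-domain route sidesteps this by reducing the positivity question to a rational function of $s$ that factors into inversion-friendly Gamma-type pieces.
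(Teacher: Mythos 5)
Your proposal is correct, and for the substantive part it takes a genuinely different route from the paper. For parts (a) and (b) the two arguments are essentially the same idea in different clothing: the paper conditions on $Z_2$ and writes $F_{12}$ as an integral against $\phi$, then convolves with the density of the remaining coordinates, whereas you condition on the whole remainder $R_{(i)}$ and differentiate $E\bigl[H\bigl(\theta_i(t-R_{(i)})_+\bigr)\bigr]$ under the expectation; both reduce (a) and (b) to the positivity of an explicit integrand (and your computation $h'(v)=-h(v)(v+1)/(2v)$ is right). The real divergence is in (c). The paper stays in the time domain: it reduces to the two-coordinate function $h(\theta_1,\theta_2)=\partial F_{12}/\partial\theta_1$, passes to a trigonometric parametrization, and shows $h(a,b)-h(b,a)>0$ by an integration-by-parts identity for $\int_0^{\pi/2}\cos(2\rho)\exp\{(b-a)t\cos(2\rho)/4\}\,\mathrm{d}\rho$. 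You instead work in the Laplace domain; I checked your algebra and it is correct: $\mathcal{L}\{f_k\}(s)=\partial_{\theta_k}[L(s)/s]=L(s)/\{\theta_k(\theta_k+2s)\}$, the partial-fraction combination does produce the factor $(\theta_j-\theta_i)(\theta_i+\theta_j+2s)\theta_i^{-1/2}\theta_j^{-1/2}(\theta_i+2s)^{-3/2}(\theta_j+2s)^{-3/2}$, and the two inverse transforms and the resulting convolution formula are as you state, giving a manifestly positive representation of $f_i-f_j$. Your approach buys an explicit closed-form positive kernel for $f_i-f_j$ (hence strict positivity for every $t>0$ with no case analysis) and exploits the full product structure of the characteristic function, which is arguably more structural and would generalize to other infinitely divisible summands; the paper's approach is more elementary and self-contained, requiring no transform inversion or uniqueness theorem. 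The only things you should add for a complete writeup are the routine justifications for differentiating under the expectation and under the Laplace integral (dominated convergence on compact $\theta$-neighborhoods) and the remark that $f_i-f_j$ is continuous in $t$, so that the almost-everywhere identity supplied by Laplace uniqueness upgrades to pointwise positivity.
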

Lemma \ref{lem:linear_combination_z2} itself gives some general properties
of the distribution family represented by $T_{\theta}$, and in this
article the lemma is mainly used to show the conclusion below, which
is the central technical tool to prove our main theorems.
\begin{lem}
\label{lem:h_concavity}Assume that $Z$ is a $N(0,I_{p})$ random
vector. Fix $t>0$ and let $M_{1}$ and $M_{2}$ be two $p\times p$
positive definite matrices. Define $h(\lambda;t,M_{1},M_{2})=\mathrm{pr}\left[Z^{\mathrm{T}}\left\{ \lambda M_{1}+(1-\lambda)M_{2}\right\} ^{-1}Z\le t\right]$,
and then $\partial^{2}h(\lambda;t,M_{1},M_{2})/\partial\lambda^{2}<0$.
\end{lem}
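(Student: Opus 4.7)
The plan is to reduce $h(\lambda)$ to an instance of the distribution function $F$ from Lemma~\ref{lem:linear_combination_z2} by spectrally diagonalizing $A(\lambda) = \lambda M_1 + (1-\lambda) M_2$ at each $\lambda$, and then to differentiate $h$ twice in $\lambda$ and control each resulting piece using Lemma~\ref{lem:linear_combination_z2} together with standard perturbation theory for the eigenvalues of a linearly parameterized symmetric matrix.

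First I would write $A(\lambda) = V(\lambda)\Theta(\lambda) V(\lambda)^{\mathrm{T}}$ with $V(\lambda)$ orthogonal and $\Theta(\lambda) = \mathrm{diag}(\theta_1(\lambda), \ldots, \theta_p(\lambda))$. Because $V(\lambda)^{\mathrm{T}} Z$ is again $N(0, I_p)$ for each fixed $\lambda$, one obtains $h(\lambda) = F(t; \theta_1(\lambda), \ldots, \theta_p(\lambda))$ with $F$ as in Lemma~\ref{lem:linear_combination_z2}. Restricting to the generic case of simple eigenvalues (and passing to the general case by continuity), analytic perturbation theory gives the formulas $\theta_i'(\lambda) = v_i(\lambda)^{\mathrm{T}}(M_1 - M_2) v_i(\lambda)$ and $\theta_i''(\lambda) = 2\sum_{j\ne i} |v_i(\lambda)^{\mathrm{T}}(M_1 - M_2) v_j(\lambda)|^2 / (\theta_i(\lambda) - \theta_j(\lambda))$. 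Two applications of the chain rule then yield
\[
h''(\lambda) = \sum_{i,j=1}^p \frac{\partial^2 F}{\partial \theta_i \partial \theta_j}(t; \theta(\lambda))\, \theta_i'(\lambda) \theta_j'(\lambda) + \sum_{i=1}^p f_i(t; \theta(\lambda))\, \theta_i''(\lambda).
\]

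Two of the three contributions on the right follow cleanly from Lemma~\ref{lem:linear_combination_z2}. The diagonal of the first sum, $\sum_i g_i(t;\theta)(\theta_i'(\lambda))^2$, is strictly negative by part (b). The eigenvalue second-derivative term, after symmetrizing over the indices $(i,j)$ and $(j,i)$, rewrites as
\[
\sum_{i=1}^p f_i\, \theta_i''(\lambda) = 2\sum_{i<j} |v_i(\lambda)^{\mathrm{T}}(M_1 - M_2) v_j(\lambda)|^2 \cdot \frac{f_i(t;\theta) - f_j(t;\theta)}{\theta_i(\lambda) - \theta_j(\lambda)},
\]
and each difference quotient is non-positive by part (c), so the whole sum is $\le 0$.

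The remaining piece, the off-diagonal Hessian contribution $2\sum_{i<j} (\partial^2 F/\partial \theta_i \partial \theta_j)\, \theta_i'(\lambda)\theta_j'(\lambda)$, is the main obstacle. Lemma~\ref{lem:linear_combination_z2} does not directly pin down the sign of the mixed partials of $F$, so I would derive an integral representation for $\partial^2 F/\partial \theta_i \partial \theta_j$ by repeating the Dirac-delta / conditional-density manipulation that underlies the computation of $f_i$ and $g_i$ in Lemma~\ref{lem:linear_combination_z2}, and then combine this representation with the $\theta_i''$ contribution above, using the Hellmann--Feynman identity for $\theta_i'$, so that the mixed-partial and perturbation-theoretic cross terms reorganize into an unambiguously negative expression. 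Closing this step, where one must exploit the specific coupling between $\theta_i'$, $\theta_i''$ and the shared eigenvectors $v_i(\lambda)$, is where I expect the real technical work of the proof to lie.
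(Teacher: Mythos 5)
Your strategy is the same as the paper's: diagonalize $\lambda M_{1}+(1-\lambda)M_{2}$, use orthogonal invariance of $Z$ to write $h(\lambda)=F(t;\theta_{1}(\lambda),\ldots,\theta_{p}(\lambda))$, differentiate twice in $\lambda$, and control the resulting pieces with Lemma \ref{lem:linear_combination_z2} together with the perturbation formulas for eigenvalues of a linear pencil (the paper cites Lancaster for these). Your treatment of two of the three pieces matches the paper's in substance: the diagonal Hessian term $\sum_{i}g_{i}(t;\theta)(\theta_{i}'(\lambda))^{2}$ is negative by part \emph{b)}, and your pairwise symmetrization of $\sum_{i}f_{i}\theta_{i}''(\lambda)$ into $2\sum_{i<j}|v_{i}^{\mathrm{T}}(M_{1}-M_{2})v_{j}|^{2}(f_{i}-f_{j})/(\theta_{i}-\theta_{j})\le0$ is algebraically equivalent to the paper's Abel-summation argument via the cumulative sums $c_{i}=\sum_{j\ge i}d_{j}$; both rest on part \emph{c)}.

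The piece you leave open, the off-diagonal Hessian contribution $2\sum_{i<j}\left(\partial^{2}F/\partial\theta_{i}\partial\theta_{j}\right)\theta_{i}'(\lambda)\theta_{j}'(\lambda)$, is a genuine gap in your proposal: Lemma \ref{lem:linear_combination_z2} says nothing about the mixed partials, and since the derivatives $\theta_{i}'(\lambda)=v_{i}^{\mathrm{T}}(M_{1}-M_{2})v_{i}$ can take arbitrary signs, even a sign condition on $\partial^{2}F/\partial\theta_{i}\partial\theta_{j}$ alone would not determine the sign of this sum; one would need something stronger, such as joint concavity of $F$ in $\theta$ (negative semidefiniteness of its full Hessian) or an explicit cancellation against the other terms, and you supply neither. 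You should know, however, that the paper's own expansion (\ref{eq:h_second_deriv}) writes $h''(\lambda)$ using only the diagonal terms $g_{i}(t;\delta)(\partial d_{i}/\partial\lambda)^{2}$ plus the $f_{i}\,\partial^{2}d_{i}/\partial\lambda^{2}$ terms, with the mixed partials absent and unremarked; the full chain rule you wrote down is the correct one. So your attempt is incomplete as a proof, but the step you single out as the real technical obstacle is precisely the step the paper's argument does not carry out, and any complete proof along this line must address it.
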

To present the main theorems of this article, we first introduce two
useful concepts: the majorization of vectors \citep{olkin2016inequalities},
and the exchangeability of a sequence of random vectors.
\begin{defn}
Let $x=(x_{1},\ldots,x_{n})$ and $y=(y_{1},\ldots,y_{n})$ be two
vectors in $\mathbb{R}^{n}$, and let $x_{[1]}\ge\cdots\ge x_{[n]}$
and $y_{[1]}\ge\cdots\ge y_{[n]}$ be the decreasing rearrangement
of $x$ and $y$ respectively. $x$ is said to be majorized by $y$,
denoted by $x\prec_{m}y$, if
\[
\sum_{i=1}^{n}x_{[i]}=\sum_{i=1}^{n}y_{[i]}\text{ and }\sum_{i=1}^{k}x_{[i]}\le\sum_{i=1}^{k}y_{[i]},k=1,2,\ldots,n-1.
\]
\end{defn}
Intuitively, $x\prec_{m}y$ indicates that $x$ and $y$ have the
same total quantity, but $y$ is more ``spread out'', or less ``equally
allocated'' than $x$.
\begin{defn}
A sequence of random vectors $W=(W_{1},\ldots,W_{r})$ is said to
be exchangeable, if for any permutation $\pi$ of $(1,2,\ldots,r)$,
$(W_{\pi(1)},\ldots,W_{\pi(r)})\overset{d}{=}(W_{1},\ldots,W_{r})$.
\end{defn}
With these notations, the first main result of this article is summarized
in Theorem \ref{thm:main_theorem}.
\begin{thm}
\label{thm:main_theorem}Let $W=(W_{1},\ldots,W_{r})$ be an exchangeable
sequence of positive definite random matrices of size $p\times p$,
and let $Z$ be a $N(0,I_{p})$ random vector that is independent
of $W$. If $\psi=(\psi_{1},\ldots,\psi_{r})$ and $\eta=(\eta_{1},\ldots,\eta_{r})$
are two sequences of nonnegative constants such that $\psi\prec_{m}\eta$,
then
\[
Z^{\mathrm{T}}\left(\sum_{i=1}^{r}\psi_{i}W_{i}\right)^{-1}Z\preceq_{st}Z^{\mathrm{T}}\left(\sum_{i=1}^{r}\eta_{i}W_{i}\right)^{-1}Z.
\]
\end{thm}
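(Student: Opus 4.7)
The plan is to reduce the general majorization relation $\psi \prec_m \eta$ to the case of a single $T$-transform via the Hardy--Littlewood--P\'olya characterization, and then handle that case using Lemma \ref{lem:h_concavity} together with the exchangeability of $W$. Recall that $\psi \prec_m \eta$ if and only if $\psi$ can be obtained from $\eta$ by a finite sequence of $T$-transforms, where each $T$-transform acts on two coordinates, say $i$ and $j$, by replacing $(\eta_i,\eta_j)$ with $(\alpha \eta_i + (1-\alpha)\eta_j,\,(1-\alpha)\eta_i + \alpha \eta_j)$ for some $\alpha \in [0,1]$ while leaving all other coordinates fixed. Since stochastic ordering is transitive, it will suffice to prove the claim when $\psi$ differs from $\eta$ by a single $T$-transform. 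By symmetry and the exchangeability of $W$, we may assume without loss of generality that the transform acts on the first two coordinates.

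With this reduction, I would set $A = \sum_{i=1}^r \eta_i W_i$ and let $B$ denote the matrix obtained from $A$ by swapping the coefficients of $W_1$ and $W_2$, namely $B = \eta_2 W_1 + \eta_1 W_2 + \sum_{i \ge 3} \eta_i W_i$. A direct computation then gives $\sum_{i=1}^r \psi_i W_i = \alpha A + (1-\alpha) B$. Because $(W_1,W_2,W_3,\ldots,W_r)$ is exchangeable, swapping $W_1$ and $W_2$ preserves the joint distribution, so in particular $(A,B) \overset{d}{=} (B,A)$ and hence the marginal laws of $A$ and $B$ coincide. Both $A$ and $B$ are almost surely positive definite as positive linear combinations of positive definite matrices.

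The core step is to condition on $(A,B)$ and apply Lemma \ref{lem:h_concavity}. For any fixed $t > 0$, writing
\[
h(\lambda) = \mathrm{pr}\left[Z^{\mathrm{T}}\left\{\lambda A + (1-\lambda) B\right\}^{-1} Z \le t \,\bigl|\, A, B\right],
\]
the lemma yields that $h$ is strictly concave on $[0,1]$. Concavity gives $h(\alpha) \ge \alpha h(1) + (1-\alpha) h(0)$, that is,
\[
\mathrm{pr}\left[Z^{\mathrm{T}}(\alpha A + (1-\alpha) B)^{-1} Z \le t \mid A, B\right] \ge \alpha\, \mathrm{pr}[Z^{\mathrm{T}} A^{-1} Z \le t \mid A, B] + (1-\alpha)\, \mathrm{pr}[Z^{\mathrm{T}} B^{-1} Z \le t \mid A, B].
\]
Taking unconditional expectations and using $A \overset{d}{=} B$ to merge the two terms on the right, I obtain $\mathrm{pr}[Z^{\mathrm{T}}(\sum \psi_i W_i)^{-1} Z \le t] \ge \mathrm{pr}[Z^{\mathrm{T}}(\sum \eta_i W_i)^{-1} Z \le t]$, which is precisely the desired stochastic inequality for a single $T$-transform. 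Iterating across the finite chain of $T$-transforms connecting $\eta$ to $\psi$ finishes the proof.

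The main subtlety, rather than an obstacle, lies in correctly exploiting exchangeability to match the two-matrix conditional concavity statement of Lemma \ref{lem:h_concavity} with the $r$-term linear combination in the theorem; the $T$-transform reduction is what makes this matching clean, because each transform introduces exactly one pair $(A,B)$ with identical marginals. If one instead tried to use the Birkhoff representation $\psi = \sum_k \alpha_k P_k \eta$ directly, one would face a convex combination of several dependent matrices and would need a multivariate version of Lemma \ref{lem:h_concavity}, so routing the argument through pairwise $T$-transforms is essential.
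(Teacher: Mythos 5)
Your proof is correct, and it rests on the same key ingredient as the paper's --- the concavity in $\lambda$ of $h(\lambda;t,M_{1},M_{2})$ from Lemma \ref{lem:h_concavity}, applied conditionally on $W$ --- but it handles the majorization step by a genuinely different route. The paper fixes $t$, defines the symmetric, continuous function $\phi(w_{1},\ldots,w_{r})=1-\mathrm{pr}\{Z^{\mathrm{T}}(\sum_{i}w_{i})^{-1}Z\le t\}$ on $(\mathbb{P}_{p})^{r}$, uses Lemma \ref{lem:h_concavity} only to check that $\phi$ is convex along line segments, and then invokes Theorem 2.4 of \citet{eaton1972random} as a black box to convert $\psi\prec_{m}\eta$ into the comparison of $E_{W}\{\phi(\psi_{1}W_{1},\ldots,\psi_{r}W_{r})\}$ and $E_{W}\{\phi(\eta_{1}W_{1},\ldots,\eta_{r}W_{r})\}$. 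You instead unroll $\psi\prec_{m}\eta$ into a finite chain of $T$-transforms (the Hardy--Littlewood--P\'olya characterization, which you should cite, e.g.\ via \citet{olkin2016inequalities}) and prove the single-transform case directly: the identity $\sum_{i}\psi_{i}W_{i}=\alpha A+(1-\alpha)B$ with $A\overset{d}{=}B$ by exchangeability, together with the conditional concavity $h(\alpha)\ge\alpha h(1)+(1-\alpha)h(0)$, gives the stochastic inequality, and transitivity of $\preceq_{st}$ finishes. In effect you have reproved, in self-contained form, exactly the special case of Eaton--Olshen's Theorem 2.4 that the paper needs; what you gain is independence from that citation and a transparent view of where exchangeability enters (one transposition per $T$-transform), at the cost of a slightly longer argument. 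Your closing remark about why the Birkhoff decomposition would be awkward here is apt. One small point to tidy in both your write-up and the paper's: when some coefficients vanish the individual summands are only positive semidefinite, so you should note explicitly that $A$, $B$, and each intermediate $\sum_{i}\eta_{i}^{(j)}W_{i}$ remain positive definite as long as the coefficient vector is not identically zero.
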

\begin{proof}
Let $\mathbb{P}_{p}$ denote the space of all $p\times p$ positive
definite matrices. Fix $t>0$, and define the function $\phi:(\mathbb{P}_{p})^{r}\mapsto\mathbb{R}$,
$\phi(w_{1},\ldots,w_{r})=1-\mathrm{pr}\left\{ Z^{\mathrm{T}}\left(\sum_{i=1}^{r}w_{i}\right)^{-1}Z\le t\right\} $
with each $w_{i}\in\mathbb{P}_{p}$. We are going to show that $\phi$
is convex, \emph{i.e.}, given $u_{1},\ldots,u_{r}\in\mathbb{P}_{p}$,
$v_{1},\ldots,v_{r}\in\mathbb{P}_{p}$, and any constant $0\le\lambda\le1$,
$\phi$ satisfies 
\begin{equation}
\phi(\lambda u_{1}+(1-\lambda)v_{1},\ldots,\lambda u_{r}+(1-\lambda)v_{r})\le\lambda\phi(u_{1},\ldots,u_{r})+(1-\lambda)\phi(v_{1},\ldots,v_{r}).\label{eq:phi_convexity}
\end{equation}

To verify this, let $\tilde{u}=\sum_{i=1}^{r}u_{i}\in\mathbb{P}_{p}$
and $\tilde{v}=\sum_{i=1}^{r}v_{i}\in\mathbb{P}_{p}$, so $\phi(\lambda u_{1}+(1-\lambda)v_{1},\ldots,\lambda u_{r}+(1-\lambda)v_{r})=1-\mathrm{pr}\left[Z^{\mathrm{T}}\left\{ \lambda\tilde{u}+(1-\lambda)\tilde{v}\right\} ^{-1}Z\le t\right]=1-h(\lambda;t,\tilde{u},\tilde{v})$,
where $h(\cdot)$ is defined in Lemma \ref{lem:h_concavity}. It follows
from Lemma \ref{lem:h_concavity} that $h(\cdot)$ is concave, implying
$h(\lambda)\ge\lambda h(1)+(1-\lambda)h(0)$. Since $h(1)=1-\phi(u_{1},\ldots,u_{r})$
and $h(0)=1-\phi(v_{1},\ldots,v_{r})$, (\ref{eq:phi_convexity})
holds immediately.

Moreover, $\phi$ is continuous and exchangeable on its arguments,
so it satisfies the condition of Theorem 2.4 of \citet{eaton1972random}.
As a consequence of the theorem, it follows that
\[
E_{W}\left\{ \phi(\psi_{1}W_{1},\ldots,\psi_{r}W_{r})\right\} \le E_{W}\left\{ \phi(\eta_{1}W_{1},\ldots,\eta_{r}W_{r})\right\} ,
\]
where the expectation is taken on the joint distribution of $W=(W_{1},\ldots,W_{r})$.
It is easy to see that $E_{W}\left\{ \phi(\psi_{1}W_{1},\ldots,\psi_{r}W_{r})\right\} =1-\mathrm{pr}\left\{ Z^{\mathrm{T}}\left(\sum_{i=1}^{r}\psi_{i}W_{i}\right)^{-1}Z\le t\right\} $,
so we obtain $\mathrm{pr}\left\{ Z^{\mathrm{T}}\left(\sum_{i=1}^{r}\psi_{i}W_{i}\right)^{-1}Z\le t\right\} \ge\mathrm{pr}\left\{ Z^{\mathrm{T}}\left(\sum_{i=1}^{r}\eta_{i}W_{i}\right)^{-1}Z\le t\right\} $
for any $t>0$, which concludes the proof.
\end{proof}
Theorem \ref{thm:main_theorem} does not put any specific distributional
assumptions on $W$, so it is more general than the Behrens--Fisher
problem setting where $W_{i}$'s follow Wishart distributions. Applying
Theorem \ref{thm:main_theorem} to the multivariate Behrens--Fisher
test statistic in (\ref{eq:T2_statistic}), we obtain the following
result:
\begin{thm}
\label{thm:mbf}For the multivariate Behrens--Fisher test statistic
(\ref{eq:T2_statistic}), under $H_{0}$ and the proportional covariances
assumption (\ref{eq:proportional_covariance}),
\begin{equation}
F_{p,\min\{m,n\}-p}\left(\frac{\min\{m,n\}-p}{p(\min\{m,n\}-1)}t\right)\le P(T^{2}\le t)\le F_{p,m+n-p-1}\left(\frac{m+n-p-1}{p(m+n-2)}t\right),\label{eq:T2_stochastic_order_2}
\end{equation}
where $F_{a,b}(\cdot)$ is the distribution function of an $F$-distribution
with $a$ and $b$ degrees of freedom.
\end{thm}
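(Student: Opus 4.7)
The plan is to deduce Theorem \ref{thm:mbf} in two stages: first to establish the stochastic ordering (\ref{eq:T2_stochastic_order}) in general dimension $p$ as a corollary of Theorem \ref{thm:main_theorem}, and then to convert these bounds into the $F$-distribution form via the standard Hotelling's $T^{2}$-to-$F$ transformation.

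For the first stage, let $N_{1}=m-1$, $N_{2}=n-1$, $r=m+n-2$, and $\nu=\min\{N_{1},N_{2}\}=\min\{m,n\}-1$. I decompose each sample Wishart in (\ref{eq:T2_distribution}) into a sum of independent rank-one Wisharts: if $z_{1},\ldots,z_{r}$ are i.i.d.\ $N(0,I_{p})$ and $\tilde{W}_{i}=z_{i}z_{i}^{\mathrm{T}}$, then $\sum_{i=1}^{N_{1}}\tilde{W}_{i}\sim W(I_{p},N_{1})$ and $\sum_{i=N_{1}+1}^{r}\tilde{W}_{i}\sim W(I_{p},N_{2})$ are independent and play the roles of $W_{1},W_{2}$. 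Hence $T^{2}\overset{d}{=}Z^{\mathrm{T}}(\sum_{i=1}^{r}\psi_{i}\tilde{W}_{i})^{-1}Z$ with $\psi_{i}=\lambda/N_{1}$ for $i\le N_{1}$ and $\psi_{i}=(1-\lambda)/N_{2}$ otherwise, so $\sum_{i}\psi_{i}=1$ and $(\tilde{W}_{i})$ is i.i.d., hence exchangeable. I then introduce two comparison vectors of length $r$ that also sum to $1$: the uniform $\zeta=(r^{-1},\ldots,r^{-1})$ and the concentrated $\eta=(\nu^{-1},\ldots,\nu^{-1},0,\ldots,0)$ with exactly $\nu$ nonzero entries. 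Comparing partial sums of their decreasing rearrangements gives $\zeta\prec_{m}\psi\prec_{m}\eta$; the only nontrivial inequalities $\lambda/N_{1}\le 1/\nu$ and $(1-\lambda)/N_{2}\le 1/\nu$ follow at once from $\lambda\in[0,1]$ together with $\nu\le N_{1},N_{2}$.

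Because each $\tilde{W}_{i}$ has rank one rather than being positive definite, Theorem \ref{thm:main_theorem} does not apply verbatim. The remedy is to invoke the theorem on the regularized sequence $\tilde{W}_{i}^{\epsilon}=\tilde{W}_{i}+\epsilon I_{p}$ (positive definite and exchangeable for each $\epsilon>0$) and then let $\epsilon\downarrow 0$. The standing assumption $p<\min\{m,n\}$ guarantees that $\sum_{i=1}^{\nu}\tilde{W}_{i}$ and $\sum_{i=1}^{r}\tilde{W}_{i}$ are positive definite almost surely, so the limiting quadratic forms $Z^{\mathrm{T}}(\sum\psi_{i}\tilde{W}_{i})^{-1}Z$ and its two comparators have continuous distribution functions; bounded convergence then yields $Z^{\mathrm{T}}(\sum\zeta_{i}\tilde{W}_{i})^{-1}Z\preceq_{st}T^{2}\preceq_{st}Z^{\mathrm{T}}(\sum\eta_{i}\tilde{W}_{i})^{-1}Z$. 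By independence of the rank-one summands, $\sum\zeta_{i}\tilde{W}_{i}=r^{-1}W^{*}$ with $W^{*}\sim W(I_{p},r)$ and $\sum\eta_{i}\tilde{W}_{i}=\nu^{-1}W^{**}$ with $W^{**}\sim W(I_{p},\nu)$, which is precisely (\ref{eq:T2_stochastic_order}) for arbitrary $p$.

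For the second stage, I apply the classical Hotelling-to-$F$ identity: if $Z\sim N(0,I_{p})$ is independent of $W\sim W(I_{p},n)$ with $n\ge p$, then $Z^{\mathrm{T}}(n^{-1}W)^{-1}Z=nZ^{\mathrm{T}}W^{-1}Z$ is Hotelling $T^{2}$-distributed with $(n-p+1)Z^{\mathrm{T}}W^{-1}Z/p\sim F_{p,n-p+1}$, so $\mathrm{pr}\{Z^{\mathrm{T}}(n^{-1}W)^{-1}Z\le t\}=F_{p,n-p+1}\{(n-p+1)t/(np)\}$. Specializing to $n=r=m+n-2$ yields the upper $F$-bound in (\ref{eq:T2_stochastic_order_2}), and to $n=\nu=\min\{m,n\}-1$ the lower $F$-bound. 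The principal technical obstacle is the non-positive-definiteness of the rank-one summands used in the decomposition; once that is resolved by the regularization above, the remainder is elementary bookkeeping of majorization and a routine application of the standard Hotelling-to-$F$ transformation.
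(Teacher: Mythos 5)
Your proof is correct and follows essentially the same route as the paper: decompose the two Wisharts in (\ref{eq:T2_distribution}) into $m+n-2$ exchangeable rank-one summands, sandwich the actual coefficient vector between the uniform vector and the concentrated vector in the majorization order, apply Theorem \ref{thm:main_theorem}, and convert the two extreme cases via the Hotelling-to-$F$ identity. Your $\epsilon$-regularization of the rank-one summands is a worthwhile refinement rather than a deviation: the paper applies Theorem \ref{thm:main_theorem} to what are implicitly $W(I_{p},1)$ matrices even though the theorem is stated for positive definite ones, so your limiting argument (justified by the almost-sure invertibility of $W(I_{p},\nu)$ when $\nu\ge p$) closes a small gap that the paper leaves implicit.
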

\begin{proof}
We first show that equation (\ref{eq:T2_distribution}) holds. Under
$H_{0}$, $\overline{X}-\overline{Y}\sim N(0,(m^{-1}+kn^{-1})\Sigma)$.
Let $c=m^{-1}+kn^{-1},\Delta=c\Sigma$, and $\Delta^{1/2}$ is the
symmetric square root of $\Delta$, then $T^{2}=(\overline{X}-\overline{Y})^{\mathrm{T}}\Delta^{-1/2}(m^{-1}\Delta^{-1/2}S_{1}\Delta^{-1/2}+n^{-1}\Delta^{-1/2}S_{2}\Delta^{-1/2})^{-1}\Delta^{-1/2}(\overline{X}-\overline{Y})$.
It follows that $\Delta^{-1/2}(\overline{X}-\overline{Y})\sim N(0,I_{p})$,
$\Delta^{-1/2}S_{1}\Delta^{-1/2}\sim W((m-1)^{-1}c^{-1}I_{p},m-1)$,
and $\Delta^{-1/2}S_{2}\Delta^{-1/2}\sim W(k(n-1)^{-1}c^{-1}I_{p},n-1)$.
If we let $\lambda=m^{-1}(m^{-1}+kn^{-1})^{-1}$, then (\ref{eq:T2_distribution})
can be obtained immediately.

Now let $\nu=m-1,\theta=n-1$, $\psi=\left((\nu+\theta)^{-1},\ldots,(\nu+\theta)^{-1}\right)\in\mathbb{R}^{\nu+\theta}$,
$\eta$ be a vector that contains $\nu$ elements of $\lambda/\nu$
and $\theta$ elements of $(1-\lambda)/\theta$, and $\xi\in\mathbb{R}^{\nu+\theta}$
be a vector that contains $\min\{\nu,\theta\}$ elements of $\min\{\nu,\theta\}^{-1}$
and other elements equal to zero, then it is easy to verify that $\psi\prec_{m}\eta\prec_{m}\xi$.
According to Theorem \ref{thm:main_theorem}, we have
\[
Z^{\mathrm{T}}\left(\sum_{i=1}^{r}\psi_{i}W_{i}\right)^{-1}Z\preceq_{st}Z^{\mathrm{T}}\left(\sum_{i=1}^{r}\eta_{i}W_{i}\right)^{-1}Z\preceq_{st}Z^{\mathrm{T}}\left(\sum_{i=1}^{r}\xi_{i}W_{i}\right)^{-1}Z.
\]
It is obvious that $\sum_{i=1}^{r}\psi_{i}W_{i}\sim W((\nu+\theta)^{-1}I_{p},\nu+\theta)$,
$T^{2}\overset{d}{=}Z^{\mathrm{T}}\left(\sum_{i=1}^{r}\eta_{i}W_{i}\right)^{-1}Z$,
and $\sum_{i=1}^{r}\xi_{i}W_{i}\sim W(\min\{\nu,\theta\}^{-1}I_{p},\min\{\nu,\theta\})$.
Combining with the fact that $nZ^{\mathrm{T}}W^{-1}Z\overset{d}{=}np(n-p+1)^{-1}F$
\citep{rao1973linear}, where $W\sim W(I_{p},n)$ and $F\sim F_{p,n-p+1}$,
(\ref{eq:T2_stochastic_order_2}) is confirmed.
\end{proof}
Using the inequality in Theorem \ref{thm:mbf}, a $p$-value of the
test can be computed as
\begin{equation}
\textsc{pval}(T^{2})=1-F_{p,\min\{m,n\}-p}\left(\frac{\min\{m,n\}-p}{p(\min\{m,n\}-1)}T^{2}\right),\label{eq:pvalue}
\end{equation}
and it is guaranteed that under $H_{0}$, $\mathrm{pr}\{\textsc{pval}(T^{2})\le\alpha\}\le\alpha$.

\section{Simulation Study}

\label{sec:simulation}

In this section we conduct simulation experiments to compare the testing
procedure using (\ref{eq:pvalue}) with other existing methods, including
\citet{yao1965approximate}, \citet{johansen1980welch}, \citet{nel1986solution},
and \citet{krishnamoorthy2004modified}, in terms of their Type I
errors. The experiment setting is as follows. We fix the number of
variables $p=5$, and assume that $m\le n$ without loss of generality.
Two groups of sample sizes are considered: the ``small sample''
group, with $m=10$ and $n=10,20,50$; and the ``large sample''
group, with $m=100$ and $n=100,200,500$. The true $\Sigma$ is a
realization of the $W(I_{5},10)$ distribution, and its value is fixed
during the experiment. Five different values of $k$ are considered,
$k=0.01,0.1,1,10,100$, for each combination of $m$ and $n$. Then
for each parameter setting of $(m,n,k)$, the data $(X,Y)$ are randomly
sampled 100,000 times to compute the empirical Type I error for each
method.

Figure \ref{fig:error_alpha_0.05} illustrates the results for significance
level $\alpha=0.05$. The first four methods correspond to the the
existing solutions, and the ``F-Bound'' method is the one based
on (\ref{eq:pvalue}). As can be seen from the last three columns
of the plot matrix, which correspond to the ``large sample'' case,
all five solutions perform reasonably well. However, when the sample
sizes are small, as in the first three columns of the plot matrix,
the existing methods tend to exaggerate the Type I error a lot, and
even double the pre-specified significance level in some situations.
On the contrary, even if the F-Bound method is conservative in worst
cases, it always guarantees the control of Type I error.

\begin{figure}
\begin{centering}
\includegraphics[width=0.85\textwidth]{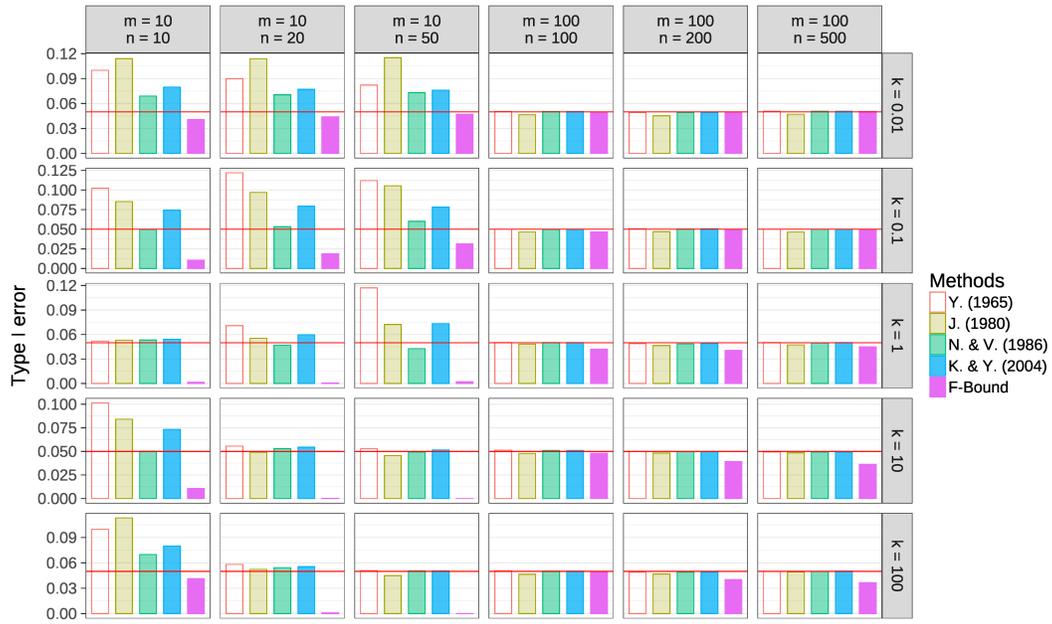}
\par\end{centering}
\caption{\label{fig:error_alpha_0.05}Type I errors of five testing methods
for the multivariate Behrens--Fisher problem under different parameter
settings, with each setting displayed in one sub-plot. The significance
level is set to $\alpha=0.05$, indicated by the horizontal lines
in each sub-plot, and the height of the bars stands for the Type I
error. The first four methods are existing solutions to the problem,
and the one labeled with ``F-Bound'' is the approach based on (\ref{eq:pvalue}).}

\end{figure}

This phenomenon is even more clear under the $\alpha=0.01$ situation,
as is shown in Figure \ref{fig:error_alpha_0.01}. Under some circumstances
the existing methods inflate the Type I error more than four times,
which may cause unreliable conclusions in real applications. Same
as the previous case, the F-Bound method is always valid despite its
conservativeness.

\begin{figure}
\begin{centering}
\includegraphics[width=0.85\textwidth]{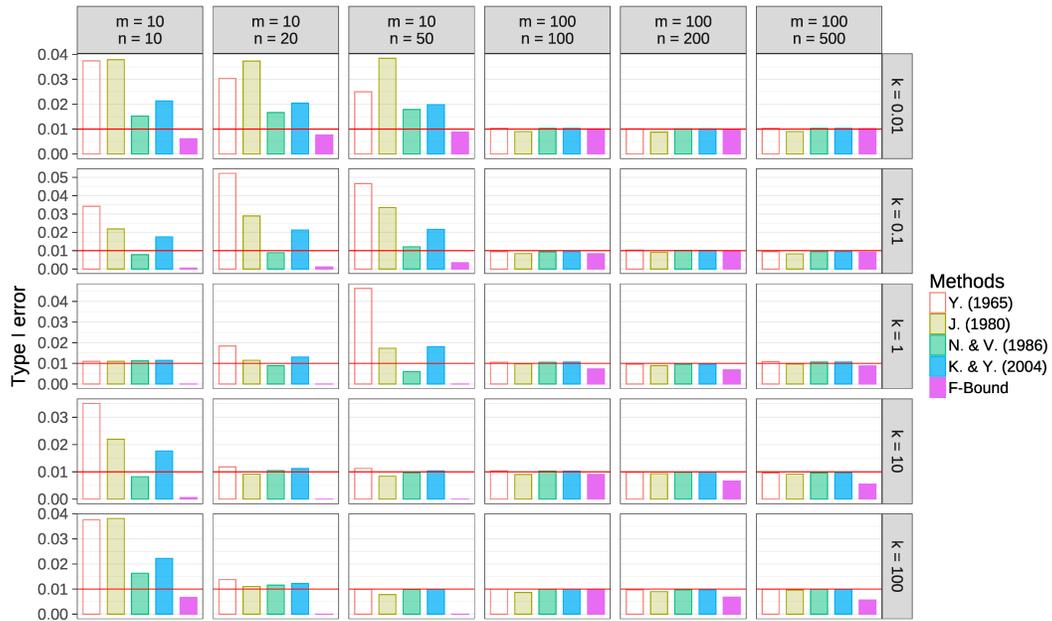}
\par\end{centering}
\caption{\label{fig:error_alpha_0.01}The same plot as Figure \ref{fig:error_alpha_0.05}
but with significance level $\alpha=0.01$.}
\end{figure}

To summarize, the simulation study indicates that the theoretical
result obtained in this article is useful to derive a testing procedure
for the multivariate Behrens--Fisher problem that guarantees the
Type I error control, which is crucial for many scientific studies.

\section{Discussion and Conclusion}

\label{sec:conclusion}

In this article we have revisited the multivariate Behrens--Fisher
problem with the proportional covariances assumption, and have derived
finite-sample lower and upper bounds for the null distribution of
the test statistic. This result extends the previous work by \citet{hsu1938contribution}
for the univariate case and \citet{eaton1972random} for the two-dimensional
case, and can be used to create a testing procedure that strongly
controls the Type I error for the multivariate Behrens--Fisher problem.

It is true that the proportional covariances assumption (\ref{eq:proportional_covariance})
is a moderately strong restriction, and one may hope to verify the
result for the most general forms of $\Sigma_{1}$ and $\Sigma_{2}$.
In this article, this assumption is made based on the following two
considerations. First, the original motivation of this article was
to generalize Theorem 3.1 of \citet{eaton1972random}, about the stochastic
ordering of a series of random quotients, from two-dimension to any
finite dimension. However, the test statistic for the most general
Behrens--Fisher problem does not belong to this type of random quotient.
Second, the technical difficulty of the general case is expected to
be formidable. As can be seen from Lemma \ref{lem:h_concavity}, there
exists some concavity property for the proportional covariances case,
which greatly helps proving the bounds. However, many examples can
be given to show that such properties are totally destroyed in the
general case, so some more advanced techniques need to be developed
in order to fully solve the general situation. We leave this possibility
for future research.

\appendix

\section{Appendix}

\subsection{Proof of Lemma \ref{lem:linear_combination_z2}}
\begin{proof}
Since $\theta_{1},\ldots,\theta_{p}$ are exchangeable in $F(t;\theta)$,
we will prove the case for $i=1,j=2$ without loss of generality.
Define the random variable $T_{12}=Z_{1}^{2}/\theta_{1}+Z_{2}^{2}/\theta_{2}$
with the distribution function $F_{12}(t;\theta_{1},\theta_{2})$,
and let $\phi(x)$ and $\Phi(x)$ denote the density function and
distribution function of $N(0,1)$, respectively, then
\begin{align}
F_{12}(t;\theta_{1},\theta_{2}) & =\int_{-(\theta_{2}t)^{1/2}}^{(\theta_{2}t)^{1/2}}\mathrm{pr}(Z_{1}^{2}/\theta_{1}+s^{2}/\theta_{2}\le t)\phi(s)\mathrm{d}s\nonumber \\
 & =2\int_{0}^{(\theta_{2}t)^{1/2}}\left\{ 2\Phi\left(\left\{ \theta_{1}(t-s^{2}/\theta_{2})\right\} ^{1/2}\right)-1\right\} \phi(s)\mathrm{d}s\nonumber \\
 & =4\int_{0}^{(\theta_{2}t)^{1/2}}\Phi\left(\left\{ \theta_{1}(t-s^{2}/\theta_{2})\right\} ^{1/2}\right)\phi(s)\mathrm{d}s-2\Phi((\theta_{2}t)^{1/2})+1,\nonumber \\
\frac{\partial F_{12}(t;\theta_{1},\theta_{2})}{\partial\theta_{1}} & =2\int_{0}^{(\theta_{2}t)^{1/2}}(t-s^{2}/\theta_{2})^{1/2}\theta_{1}^{-1/2}\phi\left(\left\{ \theta_{1}(t-s^{2}/\theta_{2})\right\} ^{1/2}\right)\phi(s)\mathrm{d}s>0.\label{eq:dF12_d1}
\end{align}
Moreover, using the fact that $\phi'(x)=-x\phi(x)$, we have
\begin{align*}
 & \frac{\partial^{2}F_{12}(t;\theta_{1},\theta_{2})}{\partial\theta_{1}^{2}}\\
= & -\int_{0}^{(\theta_{2}t)^{1/2}}\left\{ \left(t-s^{2}\theta_{2}^{-1}\right)^{3/2}\theta_{1}^{-1/2}+\left(t-s^{2}\theta_{2}^{-1}\right)^{\frac{1}{2}}\theta_{1}^{-3/2}\right\} \phi\left(\left\{ \theta_{1}(t-s^{2}\theta_{2}^{-1})\right\} ^{1/2}\right)\phi(s)\mathrm{d}s<0.
\end{align*}

Let $\tilde{f}(t;\theta_{3},\ldots,\theta_{p})$ be the density function
of $\tilde{T}=\sum_{i=3}^{p}Z_{i}^{2}/\theta_{i}$, and then $T_{\theta}=T_{12}+\tilde{T}$
has the distribution function $F(t;\theta)=\int_{0}^{t}F_{12}(s;\theta_{1},\theta_{2})\tilde{f}(t-s;\theta_{3},\ldots,\theta_{p})\mathrm{d}s$.
Taking the partial derivatives with respect to $\theta_{1}$ on both
sides, we have $f_{1}(t;\theta)=\int_{0}^{t}\left(\partial F_{12}(s;\theta_{1},\theta_{2})/\partial\theta_{1}\right)\tilde{f}(t-s;\theta_{3},\ldots,\theta_{p})\mathrm{d}s>0$
and $g_{1}(t;\theta)=\int_{0}^{t}\left(\partial^{2}F_{12}(s;\theta_{1},\theta_{2})/\partial\theta_{1}^{2}\right)\tilde{f}(t-s;\theta_{3},\ldots,\theta_{p})\mathrm{d}s<0$,
which prove the statements \emph{a)} and \emph{b)}.

Now let $h(\theta_{1},\theta_{2})=\partial F_{12}(t;\theta_{1},\theta_{2})/\partial\theta_{1}$
as in (\ref{eq:dF12_d1}), and fix $0<a<b$ with $r=b/a>1$. With
change of variables $u=s(bt)^{-1/2}$ followed by $\rho=\arcsin(u)$,
we obtain
\begin{align*}
h(a,b) & =2r^{1/2}t\int_{0}^{1}(1-u^{2})^{1/2}\phi((at)^{1/2}(1-u^{2})^{1/2})\phi((bt)^{1/2}u)\mathrm{d}u\\
 & =r^{1/2}t\pi^{-1}\int_{0}^{1}(1-u^{2})^{1/2}\exp\left\{ -at(1-u^{2})/2-btu^{2}/2\right\} \mathrm{d}u\\
 & =r^{1/2}t\pi^{-1}\int_{0}^{\pi/2}\cos^{2}\rho\exp(-at\cos^{2}\rho/2-bt\sin^{2}\rho/2)\mathrm{d}\rho\equiv r^{1/2}t\pi^{-1}I_{ab}.
\end{align*}
Similarly, by switching the order of $a$ and $b$ and with another
change of variable $\eta=\pi/2-\rho$, it follows that
\[
h(b,a)=r^{-1/2}t\pi^{-1}\int_{0}^{\pi/2}\sin^{2}\rho\exp(-at\cos^{2}\eta/2-bt\sin^{2}\eta/2)\mathrm{d}\eta\equiv r^{-1/2}t\pi^{-1}I_{ba}.
\]
Therefore,
\begin{align}
 & h(a,b)-h(b,a)\nonumber \\
 & \ge t\pi^{-1}(I_{ab}-I_{ba})=t\pi^{-1}\int_{0}^{\pi/2}\cos(2\rho)\exp(-at\cos^{2}\rho/2-bt\sin^{2}\rho/2)\mathrm{d}\rho\nonumber \\
 & =t\pi^{-1}\exp\{-(a+b)t/4\}\int_{0}^{\pi/2}\cos(2\rho)\exp\{(b-a)t\cos(2\rho)/4\}\mathrm{d}\rho\nonumber \\
 & =(b-a)t^{2}(8\pi)^{-1}\exp\{-(a+b)t/4\}\int_{0}^{\pi/2}\sin^{2}(2\rho)\exp\{(b-a)t\cos(2\rho)/4\}\mathrm{d}\rho>0.\label{eq:hab_hba}
\end{align}

Now for $f_{2}(t;\theta)$, let $\theta_{ab}=(a,b,\theta_{3},\ldots,\theta_{p})$
and $\theta_{ba}=(b,a,\theta_{3},\ldots,\theta_{p})$, and then by
symmetry we have $f_{2}(t;\theta_{ab})=f_{1}(t;\theta_{ba})$. Hence
as a consequence of (\ref{eq:hab_hba}), we finally get
\begin{align*}
f_{1}(t;\theta_{ab})-f_{2}(t;\theta_{ab}) & =f_{1}(t;\theta_{ab})-f_{1}(t;\theta_{ba})\\
 & =\int_{0}^{t}\{h(a,b)-h(b,a)\}\tilde{f}(t-s;\theta_{3},\ldots,\theta_{p})\mathrm{d}s>0,
\end{align*}
whenever $0<a<b$, which concludes the proof of \emph{c)}.
\end{proof}

\subsection{Proof of Lemma \ref{lem:h_concavity}}
\begin{proof}
For simplicity we omit the parameters $t,M_{1},$ and $M_{2}$ in
$h(\cdot)$ when no confusion is caused. Let $M(\lambda)=\lambda M_{1}+(1-\lambda)M_{2}$
be a matrix-valued function dependent on $\lambda$, and assume its
eigen decomposition is $M(\lambda)=\Gamma(\lambda)D(\lambda)\Gamma(\lambda)^{\mathrm{T}}$,
where $D(\lambda)=\mathrm{diag}\left\{ d_{1}(\lambda),\ldots,d_{p}(\lambda)\right\} $
contains the sorted eigenvalues $d_{1}(\lambda)\ge\cdots\ge d_{p}(\lambda)>0$,
and $\Gamma(\lambda)=(\gamma_{1}(\lambda),\ldots,\gamma_{p}(\lambda))$
are the associated eigenvectors. Again we will omit the $\lambda$
arguments in the relevant quantities above whenever appropriate.

Since $M^{-1}=\Gamma D^{-1}\Gamma^{\mathrm{T}}$, we have $h(\lambda)=\mathrm{pr}(Z^{\mathrm{T}}\Gamma D^{-1}\Gamma^{\mathrm{T}}Z\le t)=\mathrm{pr}(Z^{\mathrm{T}}D^{-1}Z\le t)$.
The second identity holds since $\Gamma^{\mathrm{T}}Z\sim N(0,\Gamma^{\mathrm{T}}\Gamma),\Gamma^{\mathrm{T}}\Gamma=I_{p}$
and thus $Z\overset{d}{=}\Gamma^{\mathrm{T}}Z$. Therefore, using
the notations in Lemma \ref{lem:linear_combination_z2}, we have $h(\lambda)=\mathrm{pr}(\sum_{i=1}^{p}Z_{i}^{2}/d_{i}\le t)=F(t;\delta)$
where $\delta=(d_{1},\ldots,d_{p})$. As a result,
\begin{equation}
h''(\lambda)=\sum_{i=1}^{p}\left[g_{i}(t;\delta)\left(\frac{\partial d_{i}}{\partial\lambda}\right)^{2}+f_{i}(t;\delta)\left(\frac{\partial^{2}d_{i}}{\partial\lambda^{2}}\right)\right],\label{eq:h_second_deriv}
\end{equation}
where $f_{i}(t;\delta)$ and $g_{i}(t;\delta)$ are also defined in
Lemma \ref{lem:linear_combination_z2}.

Theorem 9 and Theorem 10 of \citet{lancaster1964eigenvalues} provide
explicit expressions for $\partial^{2}d_{i}/\partial\lambda^{2}$,
where the former assumes $d_{i}$'s are distinct while the latter
considers multiplicity of eigenvalues. For now we shall assume that
$d_{i}$'s are all distinct for brevity of the proof. The same technique
applies to the more general case.

Let $M^{(k)}$ be the $k$th derivative of $M$ with respect to $\lambda$,
then clearly $M^{(1)}=M_{1}-M_{2}$ and $M^{(2)}=O$ where $O$ is
the zero matrix. Also define $p_{ij}=\gamma_{i}^{\mathrm{T}}M^{(1)}\gamma_{j}=p_{ji}$,
then according to Theorem 9 of \citet{lancaster1964eigenvalues},
\[
\frac{\partial^{2}d_{i}}{\partial\lambda^{2}}=2\sum_{\substack{k=1\\
k\neq i
}
}^{p}\frac{p_{ik}p_{ki}}{d_{i}-d_{k}}=2\sum_{\substack{k=1\\
k\neq i
}
}^{p}\frac{p_{ik}^{2}}{d_{i}-d_{k}}.
\]
Now consider the cumulative sum of eigenvalues from the bottom, defined
as $c_{i}=\sum_{j=i}^{p}d_{j}$, whose second derivative is given
by
\begin{align}
\frac{\partial^{2}c_{i}}{\partial\lambda^{2}} & =\sum_{j=i}^{p}\frac{\partial^{2}d_{j}}{\partial\lambda^{2}}=2\sum_{j=i}^{p}\sum_{\substack{k=1\\
k\neq j
}
}^{p}\frac{p_{jk}^{2}}{d_{j}-d_{k}}=2\sum_{j=i}^{p}\left(\sum_{k=1}^{i-1}\frac{p_{jk}^{2}}{d_{j}-d_{k}}+\sum_{\substack{k=i\\
k\neq j
}
}^{p}\frac{p_{jk}^{2}}{d_{j}-d_{k}}\right)\nonumber \\
 & =2\sum_{j=i}^{p}\sum_{k=1}^{i-1}\frac{p_{jk}^{2}}{d_{j}-d_{k}}+\sum_{\substack{j,k\ge i\\
j\neq k
}
}\frac{p_{jk}^{2}}{d_{j}-d_{k}}.\label{eq:summations}
\end{align}
For $j\neq k$, $p_{jk}^{2}(d_{j}-d_{k})^{-1}+p_{kj}^{2}(d_{k}-d_{j})^{-1}=0$,
so the second term in (\ref{eq:summations}) is zero. For the first
term, since $k<j$ and hence $d_{j}<d_{k}$, we conclude that $\partial^{2}c_{i}/\partial\lambda^{2}<0$.

With this result, $h''(\lambda)$ in (\ref{eq:h_second_deriv}) can
be written as
\begin{align}
h''(\lambda) & =\sum_{i=1}^{p}g_{i}(t;\delta)\left(\frac{\partial d_{i}}{\partial\lambda}\right)^{2}+\sum_{i=1}^{p}f_{i}(t;\delta)\left(\frac{\partial^{2}d_{i}}{\partial\lambda^{2}}\right)\nonumber \\
 & <\sum_{i=1}^{p}f_{i}(t;\delta)\left(\frac{\partial^{2}d_{i}}{\partial\lambda^{2}}\right)=\sum_{i=1}^{p}\tilde{f}_{i}(t;\delta)\left(\frac{\partial^{2}c_{i}}{\partial\lambda^{2}}\right),\label{eq:h_second_deriv_ineq}
\end{align}
where $\tilde{f}_{1}=f_{1}$ and $\tilde{f}_{i}=f_{i}-f_{i-1}$ for
$i\ge2$. The inequality in (\ref{eq:h_second_deriv_ineq}) holds
since $g_{i}(t;\delta)<0$ by part \emph{b)} of Lemma \ref{lem:linear_combination_z2}.
Moreover, since $d_{1}\ge\cdots\ge d_{p}$, we have $f_{1}\le\cdots\le f_{p}$
and thus $\tilde{f}_{i}\ge0$ by part \emph{a)} and \emph{c)} of Lemma
\ref{lem:linear_combination_z2}. This implies that $h''(\lambda)<0$
and hence concludes the proof. 
\end{proof}
\bibliographystyle{biometrika}
\nocite{*}
\bibliography{ref}

\begin{thebibliography}{24}
\expandafter\ifx\csname natexlab\endcsname\relax\def\natexlab#1{#1}\fi

\bibitem[{Christensen \& Rencher(1997)}]{christensen1997comparison}
\textsc{Christensen, W.~F.} \& \textsc{Rencher, A.~C.} (1997).
\newblock A comparison of type i error rates and power levels for seven
  solutions to the multivariate behrens-fisher problem.
\newblock \textit{Communications in Statistics-Simulation and Computation}
  \textbf{26}, 1251--1273.

\bibitem[{Eaton \& Olshen(1972)}]{eaton1972random}
\textsc{Eaton, M.~L.} \& \textsc{Olshen, R.~A.} (1972).
\newblock Random quotients and the behrens-fisher problem.
\newblock \textit{The Annals of Mathematical Statistics} \textbf{43},
  1852--1860.

\bibitem[{Fisher(1935)}]{fisher1935fiducial}
\textsc{Fisher, R.~A.} (1935).
\newblock The fiducial argument in statistical inference.
\newblock \textit{Annals of Human Genetics} \textbf{6}, 391--398.

\bibitem[{Hayter(2013)}]{hayter2013new}
\textsc{Hayter, A.} (2013).
\newblock A new procedure for the behrens--fisher problem that guarantees
  confidence levels.
\newblock \textit{Journal of Statistical Theory and Practice} \textbf{7},
  515--536.

\bibitem[{Hsu(1938)}]{hsu1938contribution}
\textsc{Hsu, P.} (1938).
\newblock Contribution to the theory of "student's" t-test as applied to the
  problem of two samples.
\newblock \textit{Statistical Research Memoirs} \textbf{2}, 1--24.

\bibitem[{Johansen(1980)}]{johansen1980welch}
\textsc{Johansen, S.} (1980).
\newblock The welch-james approximation to the distribution of the residual sum
  of squares in a weighted linear regression.
\newblock \textit{Biometrika} \textbf{67}, 85--92.

\bibitem[{Kim \& Cohen(1998)}]{kim1998behrens}
\textsc{Kim, S.-H.} \& \textsc{Cohen, A.~S.} (1998).
\newblock On the behrens-fisher problem: a review.
\newblock \textit{Journal of Educational and Behavioral Statistics}
  \textbf{23}, 356--377.

\bibitem[{Krishnamoorthy \& Yu(2004)}]{krishnamoorthy2004modified}
\textsc{Krishnamoorthy, K.} \& \textsc{Yu, J.} (2004).
\newblock Modified nel and van der merwe test for the multivariate
  behrens--fisher problem.
\newblock \textit{Statistics \& probability letters} \textbf{66}, 161--169.

\bibitem[{Lancaster(1964)}]{lancaster1964eigenvalues}
\textsc{Lancaster, P.} (1964).
\newblock On eigenvalues of matrices dependent on a parameter.
\newblock \textit{Numerische Mathematik} \textbf{6}, 377--387.

\bibitem[{Magnus(1985)}]{magnus1985differentiating}
\textsc{Magnus, J.~R.} (1985).
\newblock On differentiating eigenvalues and eigenvectors.
\newblock \textit{Econometric Theory} \textbf{1}, 179--191.

\bibitem[{Martin \& Liu(2015)}]{martin2014marginal}
\textsc{Martin, R.} \& \textsc{Liu, C.} (2015).
\newblock Marginal inferential models: Prior-free probabilistic inference on
  interest parameters.
\newblock \textit{Journal of the American Statistical Association}
  \textbf{110}, 1621--1631.

\bibitem[{Mickey \& Brown(1966)}]{mickey1966bounds}
\textsc{Mickey, M.~R.} \& \textsc{Brown, M.~B.} (1966).
\newblock Bounds on the distribution functions of the behrens-fisher statistic.
\newblock \textit{The Annals of Mathematical Statistics} \textbf{37}, 639--642.

\bibitem[{Nel \& Van~der Merwe(1986)}]{nel1986solution}
\textsc{Nel, D.} \& \textsc{Van~der Merwe, C.} (1986).
\newblock A solution to the multivariate behrens-fisher problem.
\newblock \textit{Communications in Statistics-Theory and Methods} \textbf{15},
  3719--3735.

\bibitem[{Nel et~al.(1990)Nel, van~der Merwe \& Moser}]{nel1990exact}
\textsc{Nel, D.~d.}, \textsc{van~der Merwe, C.~A.} \& \textsc{Moser, B.}
  (1990).
\newblock The exact distributions of the univariate and multivariate
  behrens-fisher statistics with a comparison of several solutions in the
  univariate case.
\newblock \textit{Communications in Statistics-Theory and Methods} \textbf{19},
  279--298.

\bibitem[{Olkin \& Marshall(2016)}]{olkin2016inequalities}
\textsc{Olkin, I.} \& \textsc{Marshall, A.~W.} (2016).
\newblock \textit{Inequalities: Theory of majorization and its applications},
  vol. 143.
\newblock Academic press.

\bibitem[{Pan et~al.(2013)Pan, Xu, Hu et~al.}]{pan2013some}
\textsc{Pan, X.}, \textsc{Xu, M.}, \textsc{Hu, T.} et~al. (2013).
\newblock Some inequalities of linear combinations of independent random
  variables: Ii.
\newblock \textit{Bernoulli} \textbf{19}, 1776--1789.

\bibitem[{Rao(1973)}]{rao1973linear}
\textsc{Rao, C.~R.} (1973).
\newblock \textit{Linear statistical inference and its applications}, vol.~2.
\newblock Wiley New York.

\bibitem[{Ruben(2002)}]{ruben2002simple}
\textsc{Ruben, H.} (2002).
\newblock A simple conservative and robust solution of the behrens-fisher
  problem.
\newblock \textit{Sankhy{\=a}: The Indian Journal of Statistics, Series A}
  \textbf{64}, 139--155.

\bibitem[{Satterthwaite(1946)}]{satterthwaite1946approximate}
\textsc{Satterthwaite, F.~E.} (1946).
\newblock An approximate distribution of estimates of variance components.
\newblock \textit{Biometrics bulletin} \textbf{2}, 110--114.

\bibitem[{Scheff{\'e}(1943)}]{scheffe1943solutions}
\textsc{Scheff{\'e}, H.} (1943).
\newblock On solutions of the behrens-fisher problem, based on the
  t-distribution.
\newblock \textit{The Annals of Mathematical Statistics} \textbf{14}, 35--44.

\bibitem[{Scheff{\'e}(1970)}]{scheffe1970practical}
\textsc{Scheff{\'e}, H.} (1970).
\newblock Practical solutions of the behrens-fisher problem.
\newblock \textit{Journal of the American Statistical Association} \textbf{65},
  1501--1508.

\bibitem[{Tsui \& Weerahandi(1989)}]{tsui1989generalized}
\textsc{Tsui, K.-W.} \& \textsc{Weerahandi, S.} (1989).
\newblock Generalized p-values in significance testing of hypotheses in the
  presence of nuisance parameters.
\newblock \textit{Journal of the American Statistical Association} \textbf{84},
  602--607.

\bibitem[{Welch(1947)}]{welch1947generalization}
\textsc{Welch, B.~L.} (1947).
\newblock The generalization ofstudent's' problem when several different
  population variances are involved.
\newblock \textit{Biometrika} \textbf{34}, 28--35.

\bibitem[{Yao(1965)}]{yao1965approximate}
\textsc{Yao, Y.} (1965).
\newblock An approximate degrees of freedom solution to the multivariate
  behrens fisher problem.
\newblock \textit{Biometrika} \textbf{52}, 139--147.

\end{thebibliography}

\end{document}